\documentclass[conference]{IEEEtran}
\usepackage{CJK}
\usepackage{enumerate}
\usepackage{soul,color}
\usepackage{amssymb,amsmath,amsthm}
\usepackage{algorithm,algorithmic}
\usepackage{cite}
\usepackage{graphicx,epstopdf}
\usepackage{caption}

\newtheorem{definition}{\textbf{Definition}}
\newtheorem{lemma}{\textbf{Lemma}}

\newtheorem{corollary}{\textbf{Corollary}}
\captionsetup{belowskip=-10pt}
\captionsetup{font={small}}
\setlength{\abovedisplayskip}{-0.05pt}
\setlength{\belowdisplayskip}{-0.05pt}

\begin{document}

\title{Optimal Scheduling of Friendly Jammers for Securing Wireless Communication}


\author{\IEEEauthorblockN{Jialin Wan, Siyao Cheng, Shanshan Han, Jianzhong Li}
\IEEEauthorblockA{School of Computer Science and Technology, Harbin Institute of Technology, China}
}

\maketitle

\begin{abstract}
  Wireless communication systems, such as wireless sensor networks and RFIDs, are increasingly adopted to transfer potential highly sensitive information. Since the wireless medium has a sharing nature, adversaries have a chance to eavesdrop confidential information from the communication systems. Adding artificial noises caused by friendly jammers emerges as a feasible defensive technique against adversaries. This paper studies the schedule strategies of friendly jammers, which are randomly and redundantly deployed in a circumscribed geographical area and can be unrechargeable or rechargeable, to maximize the lifetime of the jammer networks and prevent the cracking of jamming effect made by the eavesdroppers, under the constraints of geographical area, energy consumption, transmission power, and threshold level. An approximation algorithm as baseline is first proposed using the integer linear programming model. To further reduce the computational complexity, a heuristic algorithm based on the greedy strategy that less consumption leads to longer lifetime is also proposed. Finally, extensive simulation results show that the proposed algorithms are effective and efficient.
\end{abstract}

\section{Motivation}
In recent years, wireless communication technology has been widely used in our daily life. Wireless communication systems, such as wireless sensor networks and RFIDs, are increasingly adopted to transfer potential highly sensitive information \cite{DTIC2000DJF,IWWIBSN2004DTMS}. Since the wireless medium has a sharing nature, there exist great challenges in securing the sensitive information transferred by wireless communication systems, and adversaries have opportunities to attack the systems or eavesdrop confidential information from the systems.

Recently, a few researchers focus on utilizing artificial noises caused by friendly jammers to secure wireless communication systems. Karim \emph{et.al.} proposed a method for deploying minimum number of jammers powered by a central power station to secure the wireless communication systems in a circumscribed geographical area\cite{MobiHoc2012SKASVSM}. The method works well in the case that the adversaries can not enter the circumscribed geographical area but can attack or eavesdrop the wireless communication when they are near the area. To minimize power consumption of jammers, \cite{MobiHoc2012SKASVSM,MobiHoc2015YYAMEGJS} further proposed excellent methods. However, the optimal geographical layouts of jammers given in \cite{MobiHoc2012SKASVSM} are unchangeable. Thus, the jamming effect produced by the methods could be cracked by eavesdroppers with strong capability \cite{SP2013NLAS}. Moreover, this system depends heavily on a central power supply, consequently is unavailable when power failure arises. To solve these problems, we suggest an alternative method by randomly and redundantly deploying battery-powered jammers first, and then dynamically scheduling the activities of jammers to secure the wireless communication systems and prevent the cracking of jamming effect made by the eavesdroppers. The critical open problem of the alternative method is the problem of optimally scheduling the redundant jammers for securing wireless communication. Further more, rechargeable jammers can also be used to prolong the survival time of jammer networks.

\begin{figure}
\centering
\includegraphics[width=6cm]{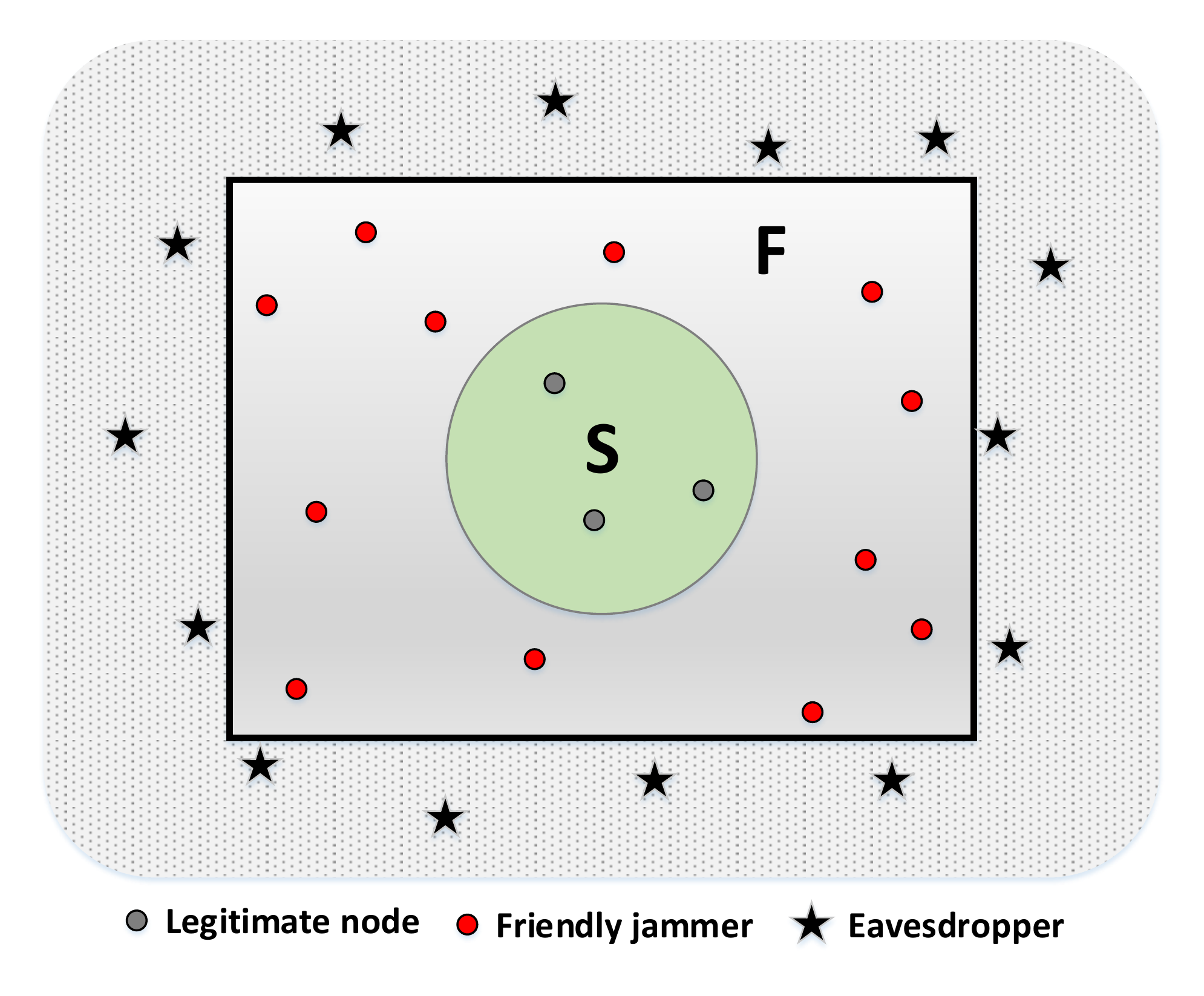}
\caption{An example Storage/Fence environment model}
\label{fig:geography}
\end{figure}

This paper focuses on the problem of optimally scheduling jammers, which are randomly and redundantly deployed and can be unrechargeable or rechargeable, for wireless communication security in a circumscribed geographical area to prevent the cracking of jamming effect made by the eavesdroppers and maximize the lifetime of the jammer networks. As shown in Figure \ref{fig:geography}, we consider the same Storage/Fence environment model as in \cite{MobiHoc2012SKASVSM}. The storage, $S$, is a closed geographic region where legitimate communication takes place. Around the storage is a fence $F$ that is used to physically prevent eavesdroppers from entering the area. Jammers, denoted by $J$, are randomly and redundantly deployed in $F \backslash S$ to cause interference. We use Signal-to-Interference-plus-Noise-Ratio (SINR), the ratio of the power of the transmitted signal to that of the interference caused by jammers, to describe the jamming effect. Based on various node characteristics, different thresholds can be determined to decide whether a receiver can hear the legitimate communication or not. Specifically, a legitimate receiver is not disturbed if the SINR of it is greater than a threshold $\delta_1$, and an eavesdropper is successfully interfered if the SINR of it is less than another threshold $\delta_2$.

The paper \cite{MobiHoc2012SKASVSM} proved that it is only needed to consider the boundary of $S$ and $F$ in the above model. Precisely speaking,
\begin{enumerate}[(i)]
\item all of the eavesdroppers located outside the fence are jammed successfully if every eavesdropper on the boundary of fence $F$ is jammed, and
\item all of the legitimate receivers inside $S$ are not jammed if any receiver on the boundary of storage $S$ is not jammed.
\end{enumerate}

With a slight abuse of notation, we refer to the boundary of storage as $S$ and the boundary of fence as $F$. Then the above description about the two threshold constraints can be formulated as
\begin{equation}\label{intro1}
\forall s \in S, SINR(s)=\frac{P_T}{\sum_{j \in J}P_J\left\|j-s\right\|^{-\gamma}} \geq \delta_1,
\end{equation}
\begin{equation}\label{intro2}
\forall p \in F, SINR(p)=\frac{P_Td(p,S)^{-\gamma}}{\sum_{j \in J}P_J\left\|j-p\right\|^{-\gamma}} \leq \delta_2.
\end{equation}
Here $P_T$ and $P_J$ are the transmission powers of legitimate nodes and jammers, respectively, $\left\|j-s\right\|$ is the Euclidean distance between $j$ and $s$, $\gamma$ is the path loss exponent, and the minimum distance between $p$ and $S$ is denoted by $d(p,S)$. Since the legitimate nodes inside $S$ are close to each other, the path loss between legitimate nodes can be ignored for brevity.

Note that both unrechargeable and rechargeable jammers are taken into account in this paper. Unrechargeable jammers have two modes, active or sleeping. When an unrechargeable jammer is active, it causes interference and consumes energy at a certain rate $c$; when in sleeping mode, it is silent and has no energy consumption. Rechargeable jammers have three modes, active, sleeping, or charging. A rechargeable jammer acts the same as unrechargeable jammer does when it is active or sleeping, and it does not cause interference and gains energy at rate $1$ when it is charging.

If we can find a series of disjoint sets of rechargeable jammers, which can be activated in a round-robin fashion to satisfy the given constraints, then the jammer network could continue working forever. Detailed analysis for this is presented in Section \uppercase\expandafter{\romannumeral6}. Unfortunately, since the budget is usually restricted in real life and the rechargeable jammers are much costlier than unrechargeable ones, the proportion of rechargeable jammers should be quite limited. So it is reasonable to assume that the total energy of the jammer network would run out eventually, even though some rechargeable jammers can be recharged. In this work, we are interested in developing feasible scheduling solutions that maximize the survival time of the jammer network.

In summary, the key contributions of this paper are listed as follows:
\begin{enumerate}[(i)]
\item We formulate the scheduling of friendly jammers into an optimization problem to maximize the survival time of the jammer network under the constraints of geographical area, energy consumption, transmission power, and threshold level.
\item We present an approximation algorithm and a greedy heuristic algorithm to solve the optimization problem.
\item We find that the integer linear programming (ILP) can be used as a step in the above two algorithms.
\item Simulation results show that the proposed algorithms are effective and efficient.
\end{enumerate}

The rest of the paper is organized as follows. Section \uppercase\expandafter{\romannumeral2} outlines the related works. Section \uppercase\expandafter{\romannumeral3} defines the optimization problem. Section \uppercase\expandafter{\romannumeral4} proves that the optimization problem is NP-hard. Our heuristic and approximate scheduling algorithms are proposed in Section \uppercase\expandafter{\romannumeral5} and analyzed in Section \uppercase\expandafter{\romannumeral6}. Experimental results are presented in Section \uppercase\expandafter{\romannumeral7}. Section \uppercase\expandafter{\romannumeral8} concludes the full paper.

\section{Related Work}
Wireless communication security has been well studied in recent years. The well-known conventional way for security is to use cryptography technique in the upper layers \cite{ICPCC2005ANHVC}. However, these approaches are impractical in many wireless communication systems, such as wireless sensor networks and RFIDs, due to the limited computing capacity of sensor nodes and the risk of secret key to be eavesdropped. Then jammers, once used by adversaries to interfere wireless communication \cite{INFOCOM2011LWW,INFOCOM2007LKP,INFOCOM2007VASD}, were introduced to prevent confidential information from being wiretapped.

There are a vast amount of extant works on friendly jammers in the literature. Vilela \emph{et.al.} showed that contention of jammers can be used for secrecy and proposed several jammer selection policies to optimize the secure throughput in the wireless networks\cite{ICC2011JPJ}. The model contains only one pair of legitimate transmitter and receiver.
Secret keys were used to control the jamming signals, such that the jamming signals are unpredictable to unauthorized devices but recoverable by authorized ones\cite{SP2013WPXH}. The method is impractical if the secret keys are eavesdropped by bad devices.
The authors of \cite{TIFS2011JMJS} focused on the design of optimal jamming configurations based on channel state information.
\cite{WoWMoM2013JJ} aimed at finding the largest number of jammers that do not cause collisions among themselves. The problem can be reduced into the maximum independent set problem. The author did not consider energy consumption.
New authentication mechanisms by integrating jamming into the communication protocol were proposed in \cite{WNS2009IPJ}.
Relay nodes were utilized to send codewords, which are independent of the source message, to confuse the eavesdroppers\cite{TIT2008LH}.
A game theoretical approach that the source pays the jammers to interfere the eavesdroppers was introduced in \cite{WCN2010HZMND}.
The authors of \cite{TSP2013AA,TWC2008SR} focused on the strategy of power allocating between transmitting data and broadcasting interference. Even though there are so many works on friendly jammers, most of them only considered the communication security between a pair of transmitter and receiver and ignored the geographical conditions. This is unrealistic in real life.

The most related works about the problem of optimally scheduling jammers, which are randomly and redundantly deployed, for wireless communication security are as follows.
Allouche \emph{et.al.} proposed a scheme to secure wireless communication through temporal jammers deployed in a restricted geographical area\cite{MobiHoc2015YYAMEGJS}. Each jammer becomes active with a certain probability on bit instants. The scheme uses bit-error probability as a measure of communication quality. The problem aims at computing a probability for each jammer to make sure that the eavesdropper's channel quality is degraded sufficiently while the legitimate communication is protected. However, the assumption of temporal jammers poses some challenging issues for hardware design.
Arkin \emph{et.al.} focused on the problem of placing a minimum number of fixed power jammers in a geographic region\cite{IPSN2015GEYAJSM}. SINR is used as the measure of communication quality. The problem is proved to be NP-hard and an ILP-based $(1+\epsilon)$-approximation scheme is proposed to solve the problem. However, the author did not consider energy consumption, not to mention the scheduling strategy of jammers to maximize lifetime.
Karim \emph{et.al.} studied strategies for allocating and managing friendly jammers in geographically restricted areas\cite{MobiHoc2012SKASVSM}. The first optimization problem, described by linear programming, aims at assigning powers to a set of fixed jammers such that the total power assigned is minimized. The second problem is to place a minimum number of jammers with the same power. The problem is modeled as ILP and a PTAS algorithm is proposed. However, since the optimal geographical layout of jammers is unchangeable, the jamming effect produced by the methods could be cracked by eavesdroppers with strong capability \cite{SP2013NLAS}. Moreover, the system depends heavily on a central power supply, and hence is unavailable when power failure arises.

The issue of extending operational time of battery-powered wireless sensor networks was investigated in \cite{WN2005CD}. The basic idea is to first organize sensors into a maximal number of disjoint set covers, and then activate these sets in turn successively to monitor all targets. A heuristic algorithm based on mixed integer programming was then proposed to compute the sets. This work is on target monitoring by sensors, not on communication protecting with friendly jammers. Moreover, the assumption that the locations of all targets are known in advance is unrealistic in protecting communication with friendly jammers since the locations of the eavesdroppers are not fixed and often unknown. Battery-powered rechargeable networks have also drawn a lot of attention among researchers \cite{INFOCOM2011CSSJ,INFOCOM2012CSSJ,INFOCOM2013GWY,INFOCOM2013FCGCH,INFOCOM2010LSK,TMC2013HCJYXS}. Although these works mainly focused on energy harvesting in sensor networks, they provided a new perspective for the security of wireless communication by rechargeable jammers.

In summary, there have been a lot of works on wireless communication security by friendly jammers, and some of them concentrate on communication security in geographically restricted areas. However, it has never been investigated and remains open problem to optimally schedule jammers to prevent the cracking of jamming effect made by the eavesdroppers and to maximize the survival time of the jammer networks in a circumscribed geographical region.

\section{Problem Definition}
The basic geographic settings in this paper are as follows. $S$ is a closed storage, which is surrounded by a fence $F$. Friendly jammers, denoted by $J$, are randomly and redundantly deployed in the region $F \backslash S$. For simplicity, we assume that all the legitimate transmitters have an identical transmission power $P_T$ and all the friendly jammers share the same jamming power $P_J$. Suppose that time is divided into identical slots, and energy is splitted into units. During each time slot a jammer consumes $c$ units of energy if active, or gains one unit of energy when charging.

Let $R_i \subseteq J$. We say the set of jammers $R_i$ is \emph{reliable}, if the following constraints are satisfied with only jammers in $R_i$ being active simultaneously.
\begin{equation}\label{prodef1}
\forall s \in S, SINR(s)=\frac{P_T}{\sum_{j \in R_i}P_J\left\|j-s\right\|^{-\gamma}} \geq \delta_1,
\end{equation}
\begin{equation}\label{prodef2}
\forall p \in F, SINR(p)=\frac{P_Td(p,S)^{-\gamma}}{\sum_{j \in R_i}P_J\left\|j-p\right\|^{-\gamma}} \leq \delta_2.
\end{equation}
The family of \emph{reliable} sets of jammers is denoted by $R=\{R_i\mid \mbox{$R_i \subseteq J$ and $R_i$ is \emph{reliable}}\}$. A jamming schedule can be defined as a sequence of \emph{reliable} jammer sets $D_1,D_2,\cdots,D_l$, where $D_i \in R$ and $D_i$ is activated in the $i^{th}$ time slot. It is worth to note that, a jammer could be covered by several different \emph{reliable} sets, but each set is activated for only one slot. For instance, as shown in Figure \ref{fig:schedule}, all jammers in $D_1$ are activated in the first slot and each jammer consumes $c$ units of energy; then $D_1$ stops working and $D_2$ is activated in the second slot, and so on.

\begin{figure}
\centering
\includegraphics[width=8cm]{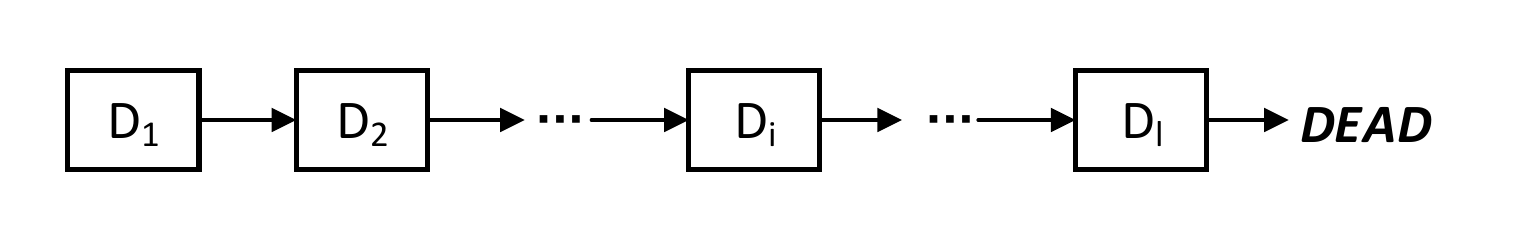}
\caption{An example jamming schedule}
\label{fig:schedule}
\end{figure}

With jammers dying gradually due to energy consumption, let's consider a situation where no such \emph{reliable} subset of $J$ can be found after the $l^{th}$ slot, and we refer to this case as \emph{DEAD}. The lifetime of the whole jammer network is defined as the maximum number of time slots before \emph{DEAD}. Various schedules result in different lifetimes of a jammer network. By appropriately scheduling, rechargeable jammers can be recharged in turn before running out of energy, and hence prolongs the lifetime of the whole jammer network. The problem of optimally scheduling jammers is to find a jamming schedule $D_1,D_2,\cdots,D_l$ such that the lifetime of a given jammer network is extended as much as possible, hence $l$ is maximized.

In the next section, we prove that the problem is NP-hard. In Section \uppercase\expandafter{\romannumeral5}, first we propose an approximation algorithm as baseline, and then a greedy heuristic algorithm based on ILP is presented to solve the problem of optimally scheduling jammers.

\section{Hardness of the Problem}
In our optimal jammers scheduling problem, we need to select a subset of jammers from $J$ in each time slot to ensure the constraints (\ref{prodef1}) and (\ref{prodef2}) are satisfied. In the most general case, both unrechargeable jammers and rechargeable jammers constitute the whole jammer set $J$ together, and each jammer has a certain life span. Let's consider a special case derived from the above general case, i.e., all of the jammers in $J$ are unrechargeable and all jammers share the identical life span $1$. That means, each jammer could be chosen to be active for only one time slot and then would die. Then we have a special problem with only unrechargeable jammers randomly distributed in the region $F \backslash S$ to cause interference to secure the legitimate communication inside the storage $S$.

Since the jammer set $J$ is limited, we can find out all of the \emph{reliable} subsets of $J$, denoted by $R_i$ $(i=1,2,\cdots$). Let $R= \{R_1,R_2,\cdots,R_m\}$ be the set of those \emph{reliable} subsets. Apparently, $R_i \subset J$. Then selecting a subset of jammers to satisfy (\ref{prodef1}) and (\ref{prodef2}) is equivalent to choosing a set from $R$. Since the life span of each jammer is $1$, a jammer could be chosen to work only once. Thus the special problem is easily converted to the classical maximum independent set problem.

Now we construct a graph $G=(V,E)$ based on $R$ as follows:
\begin{enumerate}[(i)]
\item add each $R_i \in R$ to $V$ as a vertex, and
\item $\forall R_i,R_j \in R$, if $R_i \cap R_j \neq \phi$, then add $(R_i,R_j)$ to $E$ as an edge.
\end{enumerate}
Remember that $\forall x \in J$, $x$ could be chosen only once. In other words, at most one of any two adjacent vertexes in $G$ could be chosen. The original optimal goal is to maximize the number of chosen subsets of $J$, which is equivalent to choosing the most adjacent vertexes in graph $G$. Thus, the special problem is reduced to the maximum independent set problem whose complexity is NP-Complete. Thus the complexity of the original problem is at least NP-Complete.

However, when rechargeable jammers and multiple life span of jammers are considered, the problem becomes much more complicated. Based on the analysis above, the computational complexity of our original problem of optimally scheduling jammers is NP-hard.

\section{Algorithm}
In this section, we propose an approximation algorithm, called minimum reliable set algorithm ($MRS$ for short), as baseline, and a greedy heuristic algorithm to solve the problem of optimally scheduling jammers. In the $MRS$ algorithm, we first compute a family of special \emph{reliable} subsets of jammers, called \emph{minimum reliable set}. Then the problem can be modeled as an integer linear programming problem based on the calculated \emph{minimum reliable set}, which can be solved by an existing polynomial time algorithm with a certain approximation ratio. Note that the $MRS$ algorithm is specially designed for situations where there are only unrechargeable jammers in the network. The heuristic algorithm is based on an intuitive greedy strategy that the less energy is consumed in each time slot, the longer the lifetime of the jammer network will be. The algorithm greedily computes a \emph{reliable} subset of jammers, $D_i \in R$, with the minimum energy decrease in each iteration. This process will not stop until the remaining jammers are not enough to satisfy the given threshold constraints. The final output of the greedy algorithm is the sequence $D_1,D_2,\cdots,D_l$, which means that the algorithm terminates after $l$ iterations and the lifetime of the whole jammer network is $l$.

\subsection{MRS Algorithm}
In this subsection, we focus on the special case with only unrechargeable jammers in the jammer network. We first discuss the definition of the \emph{minimum reliable solution}, then give the proof that it is sufficient to make choices just from the \emph{minimum reliable set}, and finally present the $MRS$ algorithm and prove its correctness.

\begin{definition}
For any $M_i \subseteq J$, $M_i$ is called a \emph{minimum reliable solution} if the following two conditions are satisfied:
\begin{enumerate}[(i)]
\item $M_i \in R$, which means $M_i$ meets the threshold constraints (\ref{prodef1}) and (\ref{prodef2});
\item $\forall N \subset M_i, N \notin R$, which means $N$ cannot meet the threshold constraints (\ref{prodef1}) and (\ref{prodef2}).
\end{enumerate}
The family of all the \emph{minimum reliable solutions} is called the \emph{minimum reliable set}, denoted by $M$.
\begin{equation*}
M=\{M_i\mid \mbox{$M_i \in R$ and $\forall N \subset M_i,N \notin R$}\}.
\end{equation*}
\end{definition}

\begin{lemma}\label{lemma1}
If a jammer is replaced by a new jammer with more battery power, then the lifetime of the whole jammer network will increase, or at least remain the same.
\end{lemma}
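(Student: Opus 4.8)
The plan is to prove the lemma by a monotonicity argument based on inclusion of feasible schedules: I would show that every jamming schedule that is valid for the original network remains valid after the replacement, so the set of achievable lifetimes can only grow and in particular its maximum cannot shrink. The key is to separate cleanly the two kinds of constraints that a schedule $D_1,\dots,D_l$ must obey -- the reliability requirement and the energy requirement -- and to observe that the replacement touches only the latter, and only favorably.

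First I would argue that the family $R$ of reliable sets is unchanged. The reliability requirement $D_i \in R$ depends, through (\ref{prodef1}) and (\ref{prodef2}), only on the geographic positions of the active jammers and on the common jamming power $P_J$; it never references the battery level of any jammer. Since the new jammer occupies the same location and shares the same power $P_J$, the SINR values at every $s \in S$ and every $p \in F$ are identical before and after the replacement, so exactly the same subsets of $J$ are reliable. Hence the first condition for a schedule to be valid is completely unaffected.

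Next I would isolate the energy condition. A schedule is feasible only if, for every jammer $x$, the energy $x$ expends across the slots in which it is active (net of any recharging) never exceeds the energy available to it. Replacing a single jammer by one with strictly more battery power relaxes exactly the constraints associated with that one jammer and leaves every other jammer's budget untouched. Consequently, if a schedule satisfied all energy constraints of the original network, it still satisfies them in the modified network. Combining this with the previous paragraph, any optimal schedule $D_1,\dots,D_{l^*}$ attaining the maximum lifetime $l^*$ of the original network is still valid for the modified network, so the modified network's maximum lifetime is at least $l^*$; that is, the lifetime increases or remains the same.

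I expect the main obstacle to be the rechargeable case, where the per-jammer energy feasibility is not a single budget inequality but a sequence of time-indexed inequalities coupling activity and charging. The careful point is to verify that raising the battery capacity (or initial charge) of one jammer only weakens each of that jammer's inequalities and can never tighten another jammer's, so that no previously feasible schedule is lost. The unrechargeable case is then the immediate specialization in which each jammer's family of inequalities collapses to a single total-energy bound, making the inclusion of feasible schedules transparent.
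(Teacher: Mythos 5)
Your argument is correct. The paper actually states Lemma \ref{lemma1} with no proof at all (it is treated as self-evident and then invoked inside the proof of Lemma \ref{lemma2}), so your feasible-schedule-inclusion argument supplies exactly the justification the paper omits: reliability of a set depends only on positions and the common power $P_J$, so $R$ is unchanged, while the replacement only relaxes the one affected jammer's energy constraints. You also correctly flag the only delicate point, namely that in the rechargeable case one must check by induction over time slots that the replacement jammer's stored energy dominates the original's under the same activity pattern (it does, since it charges in at least as many slots and starts with at least as much energy), so every previously feasible schedule remains feasible and the maximum lifetime cannot decrease.
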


\begin{lemma}\label{lemma2}
Consider an optimal schedule $D_1,D_2,\cdots,D_l$. Suppose that one of the \emph{reliable} set, without loss of generality, $D_i$, is not a \emph{minimum reliable solution}. Then we can construct a new optimal schedule by replacing $D_i$ with the corresponding \emph{minimum reliable solution} $D_i^{'}$.
\end{lemma}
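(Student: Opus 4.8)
The plan is to exhibit the replacement explicitly and then verify that the resulting sequence is still a legal schedule of the same length; optimality then comes for free. First I would record the structural fact underlying the whole lemma: every reliable set contains a minimum reliable solution as a subset. Since $D_i$ is reliable but, by hypothesis, fails condition (ii) of the definition, there must exist some proper subset $N \subsetneq D_i$ with $N \in R$. Among all reliable subsets of $D_i$ (a nonempty finite family, since it contains both $D_i$ and $N$), I would choose one of minimum cardinality and call it $D_i'$. By minimality no proper subset of $D_i'$ is reliable, so $D_i'$ is a minimum reliable solution; and because a reliable subset of size strictly less than $|D_i|$ exists, we get $D_i' \subsetneq D_i$.

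Next I would form the candidate schedule $D_1,\dots,D_{i-1},D_i',D_{i+1},\dots,D_l$ and check the two requirements that make a sequence a valid schedule: reliability of each term, and global energy feasibility. Reliability is immediate, since every term except the $i$-th is untouched and $D_i' \in R$ by construction; hence the new sequence consists entirely of reliable sets satisfying (\ref{prodef1}) and (\ref{prodef2}).

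The crux is energy feasibility. For each jammer $x$, let $T_x \subseteq \{1,\dots,l\}$ be the set of slots in which $x$ is active under the original schedule, so feasibility means $x$'s battery covers $c\,|T_x|$ units of consumption. Passing from $D_i$ to $D_i'$ only deactivates the jammers in $D_i \setminus D_i'$ during slot $i$ and alters nothing else, so each jammer's active-slot set shrinks to some $T_x' \subseteq T_x$. Consequently every jammer consumes no more than before, and no battery constraint is violated. This is the step I expect to demand the most care, because one must confirm that trimming the active set in a single slot cannot retroactively spoil a later slot; the monotonicity $T_x' \subseteq T_x$ is exactly what excludes this, and it is consistent with Lemma \ref{lemma1}, the jammers in $D_i \setminus D_i'$ effectively retaining spare battery.

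Finally, the new sequence is a valid schedule of length $l$, so the network survives at least $l$ slots under it. Since $D_1,\dots,D_l$ was optimal, $l$ is the maximum achievable lifetime, whence the constructed schedule attains the optimum as well, which completes the argument.
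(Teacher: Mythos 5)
Your proof is correct and takes essentially the same route as the paper: replace $D_i$ by a reliable proper subset that is minimal, observe that no jammer is active more often than before, and conclude that the lifetime is still at least $l$. The only differences are that you explicitly justify the existence of $D_i'$ (the paper takes the ``corresponding'' minimum reliable solution for granted) and that you verify directly that the unchanged suffix $D_{i+1},\dots,D_l$ stays energy-feasible, whereas the paper instead invokes Lemma~\ref{lemma1} to assert that some continuation of length at least $l-i$ exists.
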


\begin{proof}
Since $D_i^{'}$ is the \emph{minimum reliable solution} corresponding to $D_i$, then $D_i^{'} \subset D_i$.
Let's consider the schedule $D_1,\cdots,D_{i-1},D_i^{'},\cdots$, as shown in Figure \ref{fig:replace}. After the $i^{th}$ time slot, the remaining battery power of the whole network is more than that under the original schedule. According to Lemma \ref{lemma1}, the remaining lifetime of the network is no less than $l-i$, thus the total lifetime is at least $l$. Since $D_1,D_2,\cdots,D_l$ is an optimal schedule, $D_1,\cdots,D_{i-1},D_i^{'},\cdots$ is optimal, too.
\end{proof}

\begin{figure}
\centering
\includegraphics[width=8cm]{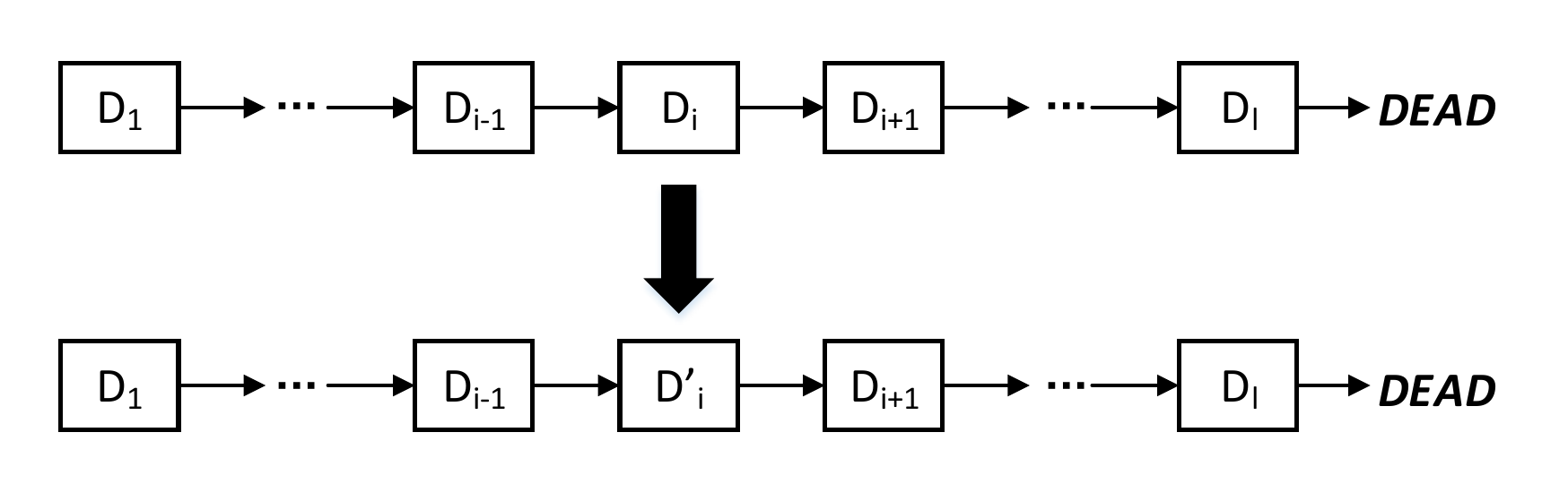}
\caption{Construct an optimal schedule by replacing $D_i$ with $D_i^{'}$}
\label{fig:replace}
\end{figure}

Based on what have been discussed above, when given an optimal schedule, we can always find a new optimal schedule, which includes only \emph{minimum reliable solutions}, by replacing all the \emph{reliable} sets with corresponding \emph{minimum reliable solutions}. Then we have the following corollary.

\begin{corollary}
We can always find an optimal schedule $D_1,D_2,\cdots,D_l$, such that for all $D_i$, $D_i$ is a \emph{minimum reliable solution}, that is, $D_i \in M$.
\end{corollary}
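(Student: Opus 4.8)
The plan is to obtain the desired schedule by repeatedly applying Lemma \ref{lemma2} until every activated set is a minimum reliable solution, organizing the argument as an induction on the number of sets that fail to be minimal.

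First I would fix any optimal schedule $D_1, D_2, \ldots, D_l$; such a schedule exists because the lifetime is a well-defined maximum over finitely many schedules. For this schedule, let $k$ denote the number of indices $i$ with $D_i \notin M$, and proceed by induction on $k$. The base case $k = 0$ is immediate: every $D_i$ is already a minimum reliable solution, so the schedule itself is the one claimed.

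For the inductive step, suppose $k \ge 1$ and pick some index $i$ with $D_i \notin M$. The first key step is to exhibit a minimum reliable solution $D_i^{'} \subset D_i$. This follows from the finiteness of $J$: since $D_i \in R$ but $D_i \notin M$, there is a proper subset of $D_i$ that still lies in $R$; iterating the removal of jammers while reliability is preserved must terminate, as $\left|D_i\right|$ strictly decreases at each step, and the terminal set is by construction a minimum reliable solution contained in $D_i$. Having produced $D_i^{'}$, Lemma \ref{lemma2} applies verbatim and yields a new schedule $D_1, \ldots, D_{i-1}, D_i^{'}, D_{i+1}, \ldots$ that is again optimal.

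Finally I would verify that this replacement strictly reduces $k$. Since only the $i$-th set is modified and $D_i^{'}$ is now a minimum reliable solution, index $i$ no longer contributes to the count, while every other index $j \neq i$ is untouched and hence its membership status in $M$ is unchanged; thus the new schedule has exactly $k-1$ non-minimal sets, and the induction hypothesis furnishes an optimal schedule consisting entirely of minimum reliable solutions. The main point to keep honest is the independence of the replacements: because $D_i^{'} \subset D_i$ only frees battery power, Lemma \ref{lemma1} guarantees the tail $D_{i+1}, \ldots$ remains feasible and the overall lifetime does not drop, so each step preserves optimality and the induction closes. This is less a genuine obstacle than careful bookkeeping, since the substantive work has already been absorbed into Lemmas \ref{lemma1} and \ref{lemma2}.
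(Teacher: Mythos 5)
Your proposal is correct and follows essentially the same route as the paper: the paper obtains the corollary by repeatedly applying Lemma \ref{lemma2} to replace every non-minimal reliable set with a corresponding \emph{minimum reliable solution}. Your version merely adds the bookkeeping the paper leaves implicit, namely the induction on the number of non-minimal sets and the explicit extraction of a minimal $D_i^{'} \subset D_i$ by iterated removal of jammers, both of which are sound.
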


Now we discuss the idea of $MRS$ algorithm. Based on the above corollary, only the \emph{minimum reliable set} should be considered to maximize the lifetime of the whole jammer network. In the first phase of $MRS$ algorithm, we need to compute the \emph{minimum reliable set} pertaining to the given $S,F$ and $J$. In the second phase, we should make a schedule from that \emph{minimum reliable set}, to determine which \emph{minimum reliable solution} should be active during each time instant.

\begin{algorithm}[h]
    \caption{MRS Algorithm}
    \begin{algorithmic}[1]
    \REQUIRE~~\\
        The set of jammers $J$;\\
        The boundary of storage $S$ and fence $F$;\\
        Parameters: $P_T$, $P_J$, $\delta_1$, $\delta_2$, $B$, $c$.
    \ENSURE~~\\
        The final jamming schedule.
    \STATE compute minimum reliable set $M$;\\
    \STATE construct the ILP model based on constraints of life span;\\
    \STATE compute the ILP problem approximately;
    \end{algorithmic}
\end{algorithm}

Let $J= \{j_1,j_2,\cdots,j_n\}$ denotes the set of jammers, and $W= \{w_1,w_2,\cdots,w_n\}$ is the corresponding life span of each jammer. Suppose that $M= \{M_1,M_2,\cdots,M_m\}$ is the computed \emph{minimum reliable set}, and
$Res= \langle n_1,n_2,\cdots,n_m\rangle$ is the optimal solution of the \emph{minimum reliable set}, which means the output schedule of jammers is the following sequence, $n_1M_1,n_2M_2,\cdots,n_mM_m$. Then the total lifetime is $\sum_{k=1}^{m}n_k$. The following indicator function $I_{M_k}(j_i)$ shows whether jammer $j_i$ is contained in $M_k$ or not.
\begin{equation*}
\forall j_i \in J, I_{M_k}(j_i)=
\begin{cases}
1, &\mbox{$j_i$ is in $M_k$};\\
0, &\mbox{$j_i$ is not in $M_k$}.
\end{cases}
\end{equation*}

Once a jammer runs out of battery power, it would die and can not be active anymore. Since all of the jammers are unrechargeable and the battery power could not increase, each jammer has a limited life span. Thus the following constraints must be satisfied.
\begin{equation*}
\forall j_i \in J,
\sum_{k=1}^{m}I_{M_k}(j_i)n_k \leq w_i.
\end{equation*}
The lifetime of the network can be formulated as $l=\sum_{k=1}^{m}n_k$.
Then the optimization goal of the problem is to maximize the lifetime $l$. This is a typical integer linear programming problem and can be solved approximately in polynomial time.

For example, assume that $J$=$\{a,b,c,d,e,f\}$, $W$=$\{2,2,2,2,$
$2,2\}$, the \emph{minimum reliable set} $M=\{M_1,M_2,M_3\}$, and the \emph{minimum reliable solutions} $M_1=\{a,b,c,e\}$, $M_2=\{a,c,$
$d\}$, $M_3= \{b,f\}$. Then the life span constraints can be shown as follow:
\begin{equation*}
\begin{cases}
n_1+n_2 \leq w_a=2,\\
n_1+n_3 \leq w_b=2,\\
n_1+n_2 \leq w_c=2,\\
n_2 \leq w_d=2,\\
n_1 \leq w_e=2,\\
n_3 \leq w_f=2.\\
\end{cases}
\end{equation*}
The goal is to maximize
\begin{equation*}
l=n_1+n_2+n_3.
\end{equation*}

Under the settings mentioned above, if we make choices from the given \emph{minimum reliable set}, $Res= \langle 1,1,1\rangle$ is a feasible solution with lifetime $l=3$, while $Res^*= \langle 0,2,2\rangle$ is the optimal solution and the corresponding maximum lifetime is $l^*=4$. However, if we do not make any schedule and rudely activate all jammers in each time slot, then the lifetime of the whole jammer network is just $l=2$.

\subsection{Greedy Algorithm}
In this subsection, we focus on the most general cases with both unrechargeable jammers and rechargeable ones in the network. Given storage $S$, fence $F$ and the set of jammers $J$, we hope to determine a jamming schedule $D_1,D_2,\cdots,D_l$ such that $l$ is maximized. As mentioned above, the jamming network is a hybrid structure of unrechargeable jammers and rechargeable ones. The unrechargeable jammers have two different modes, active and sleeping. An unrechargeable jammer consumes $c$ units of energy when active and maintains a constant energy when sleeping. Similarly, the rechargeable jammers have three modes, active, sleeping, and charging. A rechargeable jammer acts the same as unrechargeable jammer does when active or sleeping, and gains one unit of energy when charging. It is worth to note that, a rechargeable jammer is in charging mode by default when not active, and is sleeping only when full charged and not active. The greedy strategy is to minimize the amount of consumed energy units minus the amount of increased energy units, denoted by $\Delta$, under the given threshold constraints in each time slot. Each iteration can be described by an ILP model.

We denote the set of unrechargeable jammers as $J_U \subset J$ and the set of rechargeable jammers as $J_R \subset J$. Apparently, $J_U \cup J_R = J$. According to the remaining battery power of each jammer, the set of jammers $J$ can be divided into three groups: dead jammers $J_{d}$, full charged jammers $J_{f}$ and other normal jammers $J_{n}$. We use binary variables $c_i$ for each jammer $i \in J$ to indicate whether $i$ is chosen to be active or not in a time instant. Note that only chosen jammers with $c_i=1$ are activated to secure legitimate communication inside the storage. Thus the threshold constraints (\ref{prodef1}) and (\ref{prodef2}) should be modified with $c_i$ taken into account. The modified constraints are listed as follows:
\begin{equation*}
\forall s \in S, \frac{P_T}{\sum_{i \in J}c_iP_J\left\|i-s\right\|^{-\gamma}} \geq \delta_1,
\end{equation*}
\begin{equation*}
\forall p \in F, \frac{P_Td(p,S)^{-\gamma}}{\sum_{i \in J}c_iP_J\left\|i-p\right\|^{-\gamma}} \leq \delta_2.
\end{equation*}
That is to say,
\begin{equation}\label{finalconstr1}
\forall s \in S, \sum_{i \in J}c_i\left\|i-s\right\|^{-\gamma} \leq \frac{P_T}{P_J\delta_1},
\end{equation}
\begin{equation}\label{finalconstr2}
\forall p \in F, \sum_{i \in J}c_i\left\|i-p\right\|^{-\gamma} \geq \frac{P_Td(p,S)^{-\gamma}}{P_J\delta_2}.
\end{equation}

According to the proportion of rechargeable jammers in $J$, we can divide the problem into three different situations, i.e., pure unrechargeable jammers(UJs), pure rechargeable jammers(RJs), and the hybrid of UJs and RJs.

\subsubsection{Pure UJs}
Since there are no rechargeable jammers in the jammer network, the increased energy must be zero. Then the greedy strategy is to minimize
\begin{equation*}
\Delta=\sum\nolimits_{i \in J_{f} \cup J_{n}}c \cdot c_i=c\sum\nolimits_{i \in J_{f} \cup J_{n}}c_i.
\end{equation*}
\subsubsection{Pure RJs}
Remember that rechargeable jammers can be active, charging or sleeping. Then the greedy strategy is to minimize
\begin{equation*}
\begin{aligned}
\Delta&=\sum\nolimits_{i \in J_{n}}\left[c \cdot c_i-\left(1-c_i\right)\right]+\sum\nolimits_{i \in J_{f}}c \cdot c_i \\
&=\left(c+1\right)\sum\nolimits_{i \in J_{n}}c_i+c\sum\nolimits_{i \in J_{f}}c_i-\left|J_{n}\right|.
\end{aligned}
\end{equation*}
\subsubsection{The hybrid of UJs and RJs}
For all jammers in $J_U$, the total energy consumption is
\begin{equation*}
\begin{aligned}
\Delta_U=\sum\nolimits_{i \in J_U \cap(J_{f} \cup J_{n})}c \cdot c_i.
\end{aligned}
\end{equation*}
For other jammers in $J_R$, the energy decrease is
\begin{equation*}
\begin{aligned}
\Delta_R=\sum\nolimits_{i \in J_R \cap J_{n}}\left[c \cdot c_i-\left(1-c_i\right)\right]+\sum\nolimits_{i \in J_R \cap J_{f}}c \cdot c_i.
\end{aligned}
\end{equation*}
Then the greedy strategy is to minimize
\begin{equation*}
\begin{aligned}
\Delta=&\Delta_U+\Delta_R\\
=&c\sum\nolimits_{i \in J_U \cap(J_{f} \cup J_{n})}c_i+\left(c+1\right)\sum\nolimits_{i \in J_R \cap J_{n}}c_i\\
+&c\sum\nolimits_{i \in J_R \cap J_{f}}c_i-\left|J_R \cap J_{n}\right|.
\end{aligned}
\end{equation*}

\begin{algorithm}[h]
    \caption{Greedy Algorithm}
    \begin{algorithmic}[1]
    \REQUIRE~~\\
        The set of jammers $J$;\\
        The remaining battery of each jammer, $B[1:\left|J\right|]$;\\
        Parameters: $P_T$, $P_J$, $\delta_1$, $\delta_2$.
    \ENSURE~~\\
        The final jamming schedule, $D$.
    \STATE $D \leftarrow \phi$;
    \STATE $j \leftarrow 0$;
    \WHILE{(true)}
        \STATE Initial $c[1:\left|J\right|]$;\\
        \STATE Compute $ILP\_result$ such that $\Delta$ is minimized;\\
        \IF{($ILP\_result \neq NULL$)}
            \STATE Update $B[1:\left|J\right|]$;\\
            \STATE $j \leftarrow j+1$;\\
            \STATE $D[j] \leftarrow \phi$;\\
            \FOR{$i \leftarrow 1$ to $\left|J\right|$}
                \IF{($c[i]=1$)}
                    \STATE $D[j] \leftarrow D[j] \cup \{i\}$;\\
                \ENDIF
            \ENDFOR
            \STATE $D \leftarrow D \cup \{D[j]\}$;\\
        \ELSE
            \RETURN $D$;
        \ENDIF
    \ENDWHILE
    \end{algorithmic}
\end{algorithm}

As discussed above, we have three different optimization goals pertaining to different situations in each iteration. What calls for special attention is that all of the three optimization functions are linear, and the constraints (\ref{finalconstr1}) and (\ref{finalconstr2}) are linear inequalities, too. So the optimization problem in each iteration becomes a typical ILP problem and can be solved approximately in polynomial time with a certain approximation ratio. In each time slot, the algorithm computes the near optimal solution of the above ILP problem, denoted by $D_i$, which forms the final sequence of jamming schedule. Suppose the total number of iterations is $l$. On the termination of the algorithm, the sequence $D_1,D_2,\cdots,D_l$ is the ultimate output and the corresponding lifetime is $l$.

\section{Performance Analysis}
\subsection{Complexity of Greedy Algorithm}
The complexity of our greedy algorithm is dominated by the complexity of approximately solving the ILP problem and the number of iterations. Firstly, to approximately solve an ILP problem, a famous approach is to compute the corresponding LP problem, and then round the solution of LP as the final solution of the ILP problem. According to \cite{MP1991Y}, the best performance of the LP solver is $O(n^3)$ using Ye's algorithm, where $n$ is the number of variables. Secondly, note that the total battery power of all the jammers is $nB$, where $n$ is the number of jammers and $B$ is the initial battery power of each jammer. In each time slot, the whole jammer network will consume at least one unit of energy (otherwise the network could continue working forever), thus the number of iterations, or lifetime, is no more than $nB$. Hence, the complexity of our greedy algorithm is $O(Bn^4)$.
\subsection{Pruning Strategy}
In this subsection, we present a naive method to estimate the range of the number of active jammers in each time instant. The estimation result can be utilized to reduce the operation steps in our algorithms. The method gives the upper bound of the number of active jammers by computing the range of distance from jammers to points $s$ on $S$, and shows the lower bound by determining the range of distance from jammers to points $p$ on $F$.

For any node $s$ on the boundary of $S$, the distance from a jammer $j$ to it, denoted by $\left\|j-s\right\|$, meets the following inequality:
\begin{equation*}
\min_{j \in J}\left\|j-s\right\| \leq \left\|j-s\right\| \leq \max_{j \in J}\left\|j-s\right\|,
\end{equation*}
then we can deduce from Formula (\ref{finalconstr1}) the following statement:
\begin{equation*}
(\max_{j \in J}\left\|j-s\right\|)^{-\gamma}\sum_{j \in J}c_j \leq \sum_{j \in J}c_j\left\|j-s\right\|^{-\gamma} \leq \frac{P_T}{P_J\delta_1},
\end{equation*}
then,
\begin{equation*}
\sum_{j \in J}c_j \leq \frac{P_T}{P_J\delta_1}(\max_{j \in J}\left\|j-s\right\|)^{\gamma}, \forall s \in S,
\end{equation*}
thus, we get the upper bound of the number of active jammers
\begin{equation*}
\sum_{j \in J}c_j \leq \min_{s \in S}{\frac{P_T}{P_J\delta_1}(\max_{j \in J}\left\|j-s\right\|)^{\gamma}}.
\end{equation*}
Analogously, we have the following inequality deduced from Formula (\ref{finalconstr2}) that for any node $p$ on the boundary of $F$,
\begin{equation*}
(\min_{j \in J}\left\|j-p\right\|)^{-\gamma}\sum_{j \in J}c_j \geq \sum_{j \in J}c_j\left\|j-p\right\|^{-\gamma} \geq \frac{P_Td(p,S)^{-\gamma}}{P_J\delta_2},
\end{equation*}
then,
\begin{equation*}
\sum_{j \in J}c_j \geq \frac{P_Td(p,S)^{-\gamma}}{P_J\delta_2}(\min_{j \in J}\left\|j-p\right\|)^{\gamma}, \forall p \in F,
\end{equation*}
consequently we get the lower bound
\begin{equation*}
\sum_{j \in J}c_j \geq \max_{p \in F}{\frac{P_Td(p,S)^{-\gamma}}{P_J\delta_2}(\min_{j \in J}\left\|j-p\right\|)^{\gamma}}.
\end{equation*}

Based on the analysis above, the number of active jammers during each time slot should be bounded within the estimated upper and lower bound. In other words, any jammer set with less or more jammers would not meet the constraints (\ref{prodef1}) and (\ref{prodef2}) and should be excluded. Thus the steps needed in the algorithms could be reduced.

\subsection{Analysis of Lifetime}
In this subsection, we discuss the issue whether the jammer network can continue working forever or not, or in other words, whether the lifetime can be prolonged indefinitely.

Since unrechargeable jammers can only consume energy but can not be charged, the number of time slots during which they are active is limited by the initial battery power $B$ and the rate of energy consumption $c$. Specifically, an unrechargeable jammer can be chosen to cause interference for at most $\left\lfloor \frac{B}{c}\right\rfloor$ times and would die then. Therefore, whether an unlimited lifetime can be reached or not mainly depends on the number of rechargeable jammers.

\begin{figure}
\centering
\includegraphics[width=5cm]{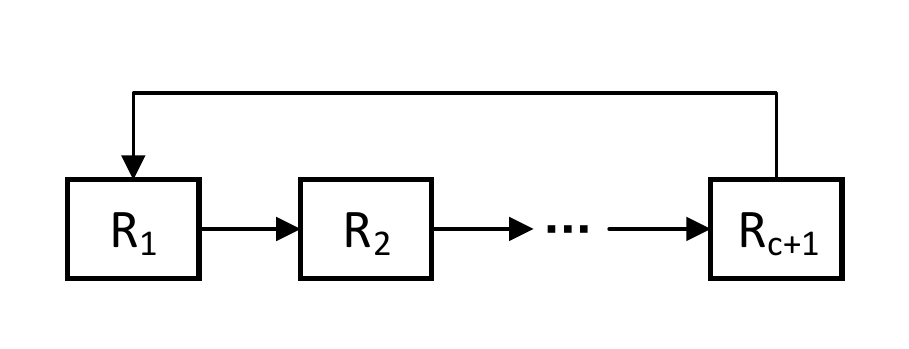}
\caption{c+1 disjoint reliable sets works in a round-robin fashion}
\label{fig:round-robin}
\end{figure}

Analogous to the \emph{Maximum Disjoint Set Covers} problem studied in \cite{ICC2001SP,WN2005CD}, if we find $c+1$ disjoint \emph{reliable} subsets of $J_R$, denoted by $\{R_1,R_2,\cdots,R_{c+1}\}$, then these subsets could be activated in turn, in a round-robin fashion, such that during each time slot only one subset is responsible for securing the communication, while all other jammers are in sleeping or charging mode. As shown in Figure \ref{fig:round-robin}, consider a period of $c+1$ successive time slots, each subset of jammers would be active for only one slot and charging for $c$ slots. From the global perspective, the remaining battery power of each jammer would remain constant after a period. That is to say, the whole jammer network does not suffer from energy decrease and could keep working successfully forever. However, if no such $c+1$ disjoint \emph{reliable} subsets are found, then each jammer does not have enough time to refill the already consumed $c$ units of energy, thus the remaining battery power of each jammer would decrease and finally run out. As a result of that, the jammer network would die eventually.

Given the storage $S$, fence $F$ and the randomly deployed jammers $J$, we can determine the least number of active jammers , $L_{jam}$, to satisfy the threshold constraints, just as mentioned above. If the total number of rechargeable jammers is less than $(c+1)L_{jam}$, then it is obviously impossible to find such $\{R_1,R_2,\cdots,R_{c+1}\}$, hence we can assert that the jammer network would die eventually; otherwise, it is possible to make an unlimited schedule for the network. Unfortunately, since the budget is restricted and the rechargeable jammers are much costlier than unrechargeable ones, the proportion of rechargeable jammers should be quite limited. So it is reasonable to assume that the network lifetime is limited in our work. Future work will focus on the relationship between unlimited lifetime and the proportion of rechargeable jammers.

\section{Simulation Results}

\subsection{Simulation Setup}
\begin{figure}
\centering
\includegraphics[width=4cm]{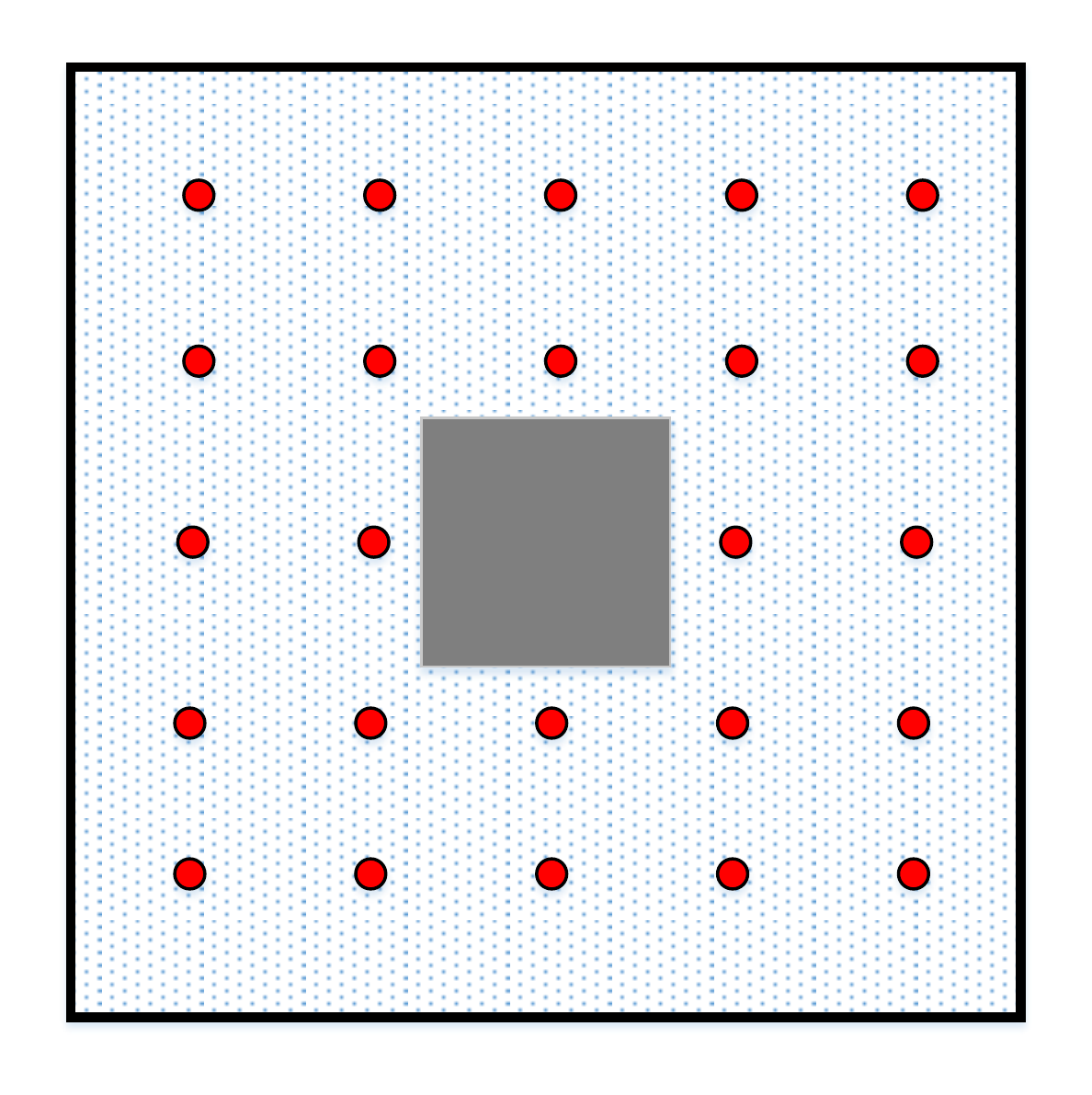}
\caption{Storage(grey shaded area)/Fence(black line) model with randomly deployed jammers(red dots)}
\label{fig:simulation}
\end{figure}

In this section, we conduct extensive simulations to evaluate the performance of our greedy algorithm under different network settings. The basic geographical setting we have chosen is shown in Figure \ref{fig:simulation}. The fence $F$ is a $100m \times 100m$ square area, and the storage $S$ is a $25m \times 25m$ area located right in the middle of $F$. According to the discretization strategy of both $F$ and $S$ in \cite{MobiHoc2012SKASVSM}, we set the step size $\lambda$ as $2m$, then the boundary of $S$ and $F$ can be divided into a discrete set of spots. We simulate a jammer network with both unrechargeable jammers and rechargeable jammers randomly located in $F \backslash S$. In this simulation, we consider the following tunable parameters:
\begin{itemize}
\item $n$, the number of jammers. We vary the number of randomly deployed jammers between $30$ and $120$ to study the relation between network lifetime and density of jammers. The default value is $n=100$.
\item $P_J$, the jamming power of all jammers. We vary $P_J$ between $0.1$ and $10$ to study the effect of jamming power on the network lifetime. The default value is $P_J=1$.
\item $B$, the life span of each jammer. It varies between $1$ and $10$. The default value is $B=10$.
\item $\delta_2$, the threshold level indicating how much capable the eavesdropper is. We vary $\delta_2$ between $0.1$ and $0.9$. The default value is $\delta_2=0.5$.
\item $\eta$, the percentage of rechargeable jammers in the whole set of jammers. It varies between $0$ and $0.8$. The default value is $\eta=0$.
\item $c$, the rate of energy consumption of each jammer. It varies between $4$ and $20$. The default value is $c=10$.
\end{itemize}

We implement our greedy algorithm with C++ and use the optimization toolbox in LINGO to solve the linear programming. For each setting of the parameters, we repeat the experiment $5$ times for different jammers' random deployment. The base value of lifetime under the hypothesis that all jammers are active in each time slot is used as reference.

\begin{figure*}
\begin{minipage}[t]{0.33\linewidth}
\centering
\includegraphics[width=5cm]{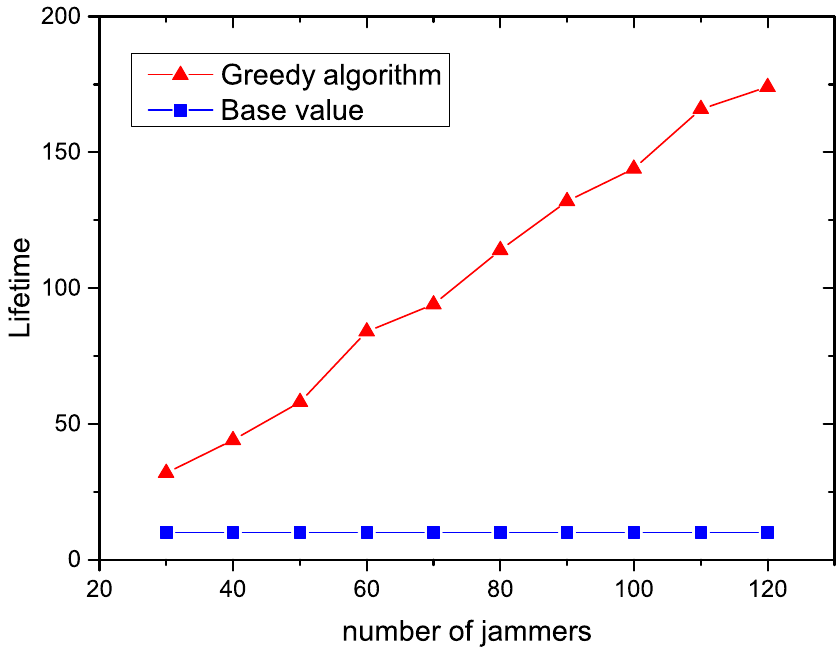}
\caption{Lifetime Vs number of jammers}
\label{fig:jammer}
\end{minipage}%
\begin{minipage}[t]{0.33\linewidth}
\centering
\includegraphics[width=5cm]{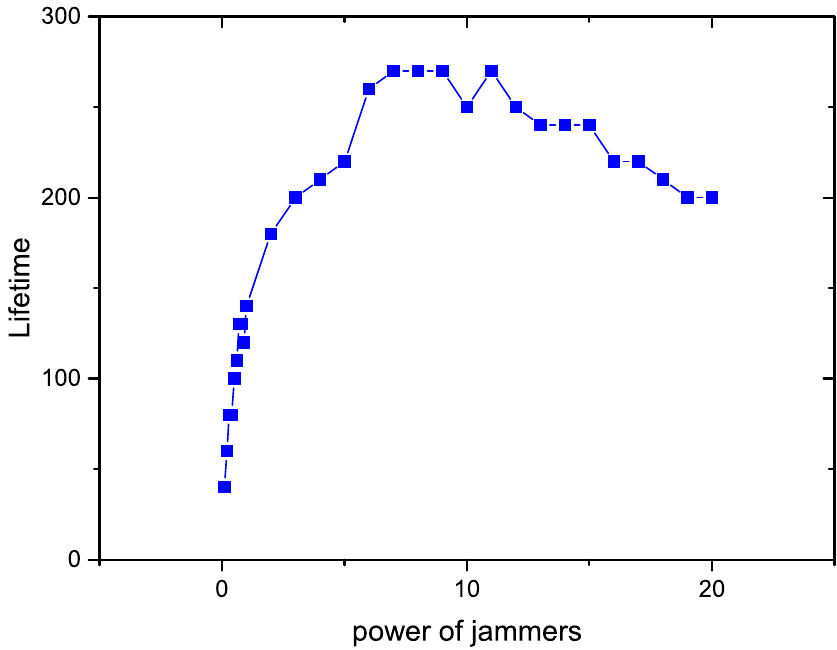}
\caption{Lifetime Vs power of jammers}
\label{fig:power}
\end{minipage}%
\begin{minipage}[t]{0.33\linewidth}
\centering
\includegraphics[width=5cm]{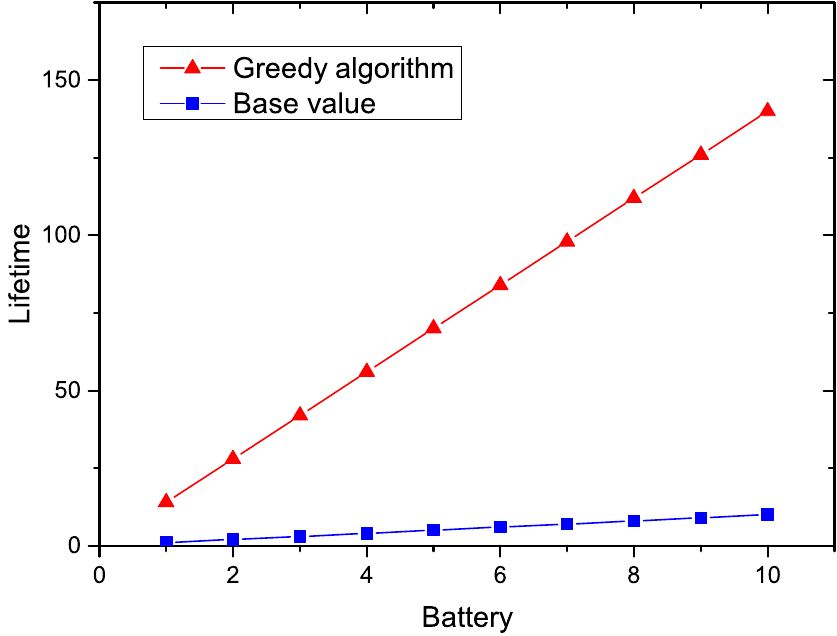}
\caption{Lifetime Vs battery of jammers}
\label{fig:battery}
\end{minipage}
\end{figure*}

\begin{figure*}
\begin{minipage}[t]{0.33\linewidth}
\centering
\includegraphics[width=5cm]{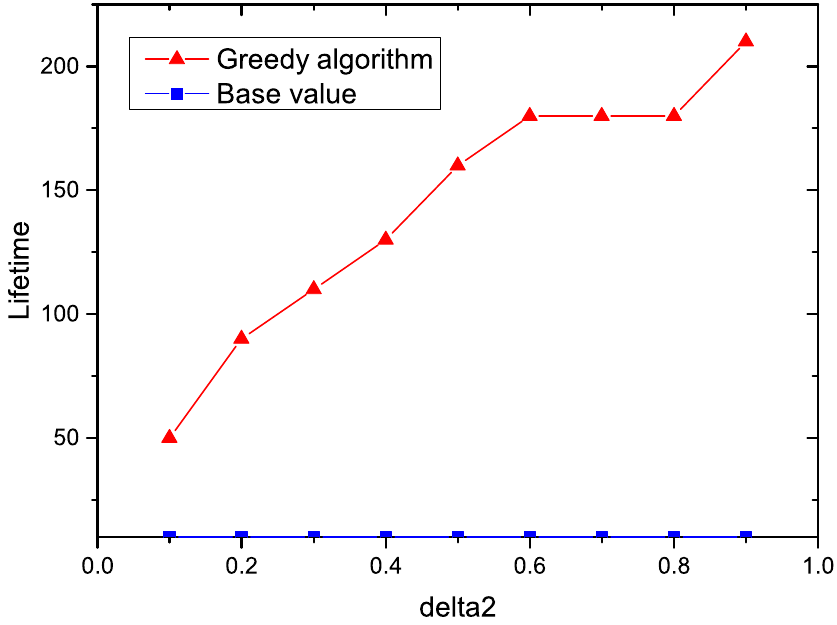}
\caption{Lifetime Vs threshold level}
\label{fig:delta}
\end{minipage}%
\begin{minipage}[t]{0.33\linewidth}
\centering
\includegraphics[width=5cm]{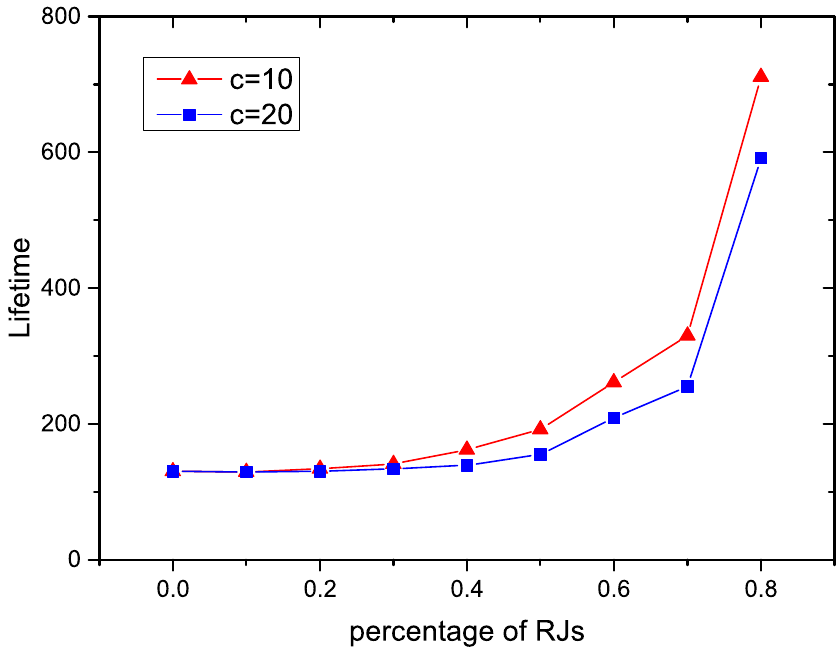}
\caption{Lifetime Vs percentage of RJs}
\label{fig:sigma}
\end{minipage}%
\begin{minipage}[t]{0.33\linewidth}
\centering
\includegraphics[width=5cm]{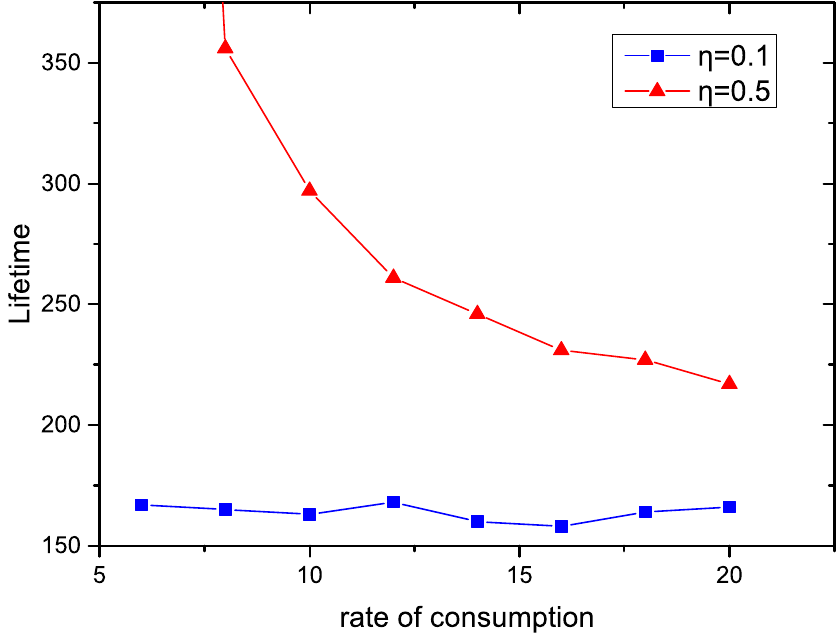}
\caption{Lifetime Vs rate of consumption}
\label{fig:consume}
\end{minipage}
\end{figure*}

\subsection{Analysis of Simulation Results}
In Figure \ref{fig:jammer}, we present the network lifetime computed by the greedy algorithm, depending on the number of randomly distributed jammers $n$, which varies between $30$ and $120$ with an increment of $10$. Network lifetime returned by the greedy algorithm increases nearly linearly with the increase of jammer density and is far greater than the base value $10$. When more jammers are deployed, a jammer in a \emph{reliable} subset can be replaced by more alternatives, then more such \emph{reliable} subsets can be found, consequently a longer lifetime can be achieved.

In Figure \ref{fig:power}, we study the impact of the jamming power on network lifetime. We consider $100$ randomly distributed jammers and vary the jamming power $P_J$ from $0.1$ to $20$. The graph shows that the jamming power has a great impact on network lifetime. The lifetime increases rapidly when jamming power grows from $0.1$ to about $6$, and maintains stable when jamming power is $6$ to $10$, then decreases when jamming power exceeds $10$. Since both the SINR of eavesdroppers and the SINR of legitimate nodes are determined by the jamming power, when the jamming power increases appropriately, the number of jammers needed to successfully interfere eavesdroppers decreases, consequently network lifetime becomes longer. However, when the jamming power grows too large, it is difficult to guarantee the legitimate nodes from being disturbed, hence the lifetime will suffer a decline.

In Figure \ref{fig:battery}, we measure the network lifetime when the life span of each jammer varies between $1$ and $10$ with an increment of $1$. We consider $100$ jammers randomly deployed. The lifetime increases proportionally to the life span of each jammer and far exceeds the base value .

Figure \ref{fig:delta} shows the relationship between network lifetime and the threshold level $\delta_2$, which indicates how much more capable the eavesdropper is over the legitimate nodes. When $\delta_2$ grows, the eavesdropper becomes less capable, then fewer jammers are needed to secure the communication, thus the network lifetime will increase.

In Figure \ref{fig:sigma}, we study the impact of the percentage of rechargeable jammers $\eta$ on network lifetime. We consider $100$ randomly deployed jammers and vary the percentage of rechargeable jammers between $0$ and $0.8$ with an increment of $0.1$. The rate of energy consumption is set to be $10$ or $20$. The lifetime grows slowly with $\eta$ between $0$ and $0.5$, but increases rapidly when $\eta$ exceeds $0.5$. Under the same percentage of rechargeable jammers, the lifetime with $c=10$ is larger than lifetime with $c=20$. When more rechargeable jammers are deployed, more jammers will be recharged during each time slot, and hence the network lifetime can be prolonged. And smaller rate of consumption means shorter charging period, thus longer lifetime. It is worth to note that, when $\eta$ is greater than a certain threshold, we can find a series of disjoint \emph{reliable} subsets such that when those subsets are activated in a round-robin fashion, the network lifetime is infinite.

In Figure \ref{fig:consume}, we measure the network lifetime when the rate of energy consumption of each jammer, $c$, varies between $4$ and $20$ with an increment of $2$. We consider $100$ randomly deployed jammers and the percentage of rechargeable jammers is $0.1$ or $0.5$. Note that to avoid the impact of life span, we assume the life span of each jammer is the constant $10$. The network lifetime decreases rapidly when $c$ is small and remains stable when $c$ is large enough. When the energy consumption rate is small and the percentage of rechargeable jammers is large, the energy consumed during one time slot will be refilled quickly, thus the lifetime will be longer, even infinite.

\section{Conclusion}
Wireless communication systems are increasingly adopted to transfer potential highly sensitive information, but are easy to be cracked or eavesdropped by adversaries due to the shared nature of wireless medium. Adding artificial noises by friendly jammers is a feasible way to protect the communication systems. This paper studies the schedule strategies of randomly deployed friendly jammers, which can be unrechargeable or rechargeable, to maximize the lifetime of the jammer networks and prevent the cracking of jamming effect made by the eavesdroppers. An ILP-based approximation algorithm is first proposed as baseline, then a heuristic algorithm based on the greedy strategy that less consumption leads to longer lifetime is also proposed with lower complexity. The theoretical analysis and extensive simulations show that our algorithms are effective and efficient. This work may be extended to randomly schedule the activities of jammers in the future.


\bibliographystyle{IEEEtran}
\bibliography{Privacybib}

\end{document}